\title{Improved FPT Approximation for Non-metric TSP} %TODO Please add
\author{Evripidis Bampis}{Sorbonne Universit\'e, CNRS, LIP6, F-75005 Paris, France}{evripidis.bampis@lip6.fr}{}{}
\author{Bruno Escoffier}{Sorbonne Universit\'e, CNRS, LIP6, F-75005 Paris, France \and Institut Universitaire de France, Paris, France}{bruno.escoffier@lip6.fr}{}{%(Optional) author-specific funding acknowledgements
}
\author{Michalis Xefteris}{Sorbonne Universit\'e, CNRS, LIP6, F-75005 Paris, France}{michail.xefteris@lip6.fr}{}{%(Optional) author-specific funding acknowledgements
}
\authorrunning{E. Bampis, B. Escoffier and M. Xefteris} %TODO mandatory. First: Use abbreviated first/middle names. Second (only in severe cases): Use first author plus 'et al.'
\keywords{TSP, FPT-approximation algorithms, fixed-parameter tractability} %TODO mandatory; please add comma-separated list of keywords
\newcommand{\opt}{\ensuremath{OPT}}
\newcommand{\alg}{\ensuremath{ALG}}
\begin{document}

\maketitle

%TODO mandatory: add short abstract of the document
\begin{abstract}
  In the Traveling Salesperson Problem (TSP) we are given a list of locations and the distances between each pair of them. The goal is to find the shortest possible tour that visits each location exactly once and returns to the starting location. Inspired by the fact that general TSP cannot be approximated in polynomial time within any constant factor, while metric TSP admits a (slightly better than) $1.5$-approximation in polynomial time,  Zhou, Li and Guo~\cite{zhou} introduced a parameter that measures the distance of a given TSP instance from the metric case. They gave an FPT $3$-approximation algorithm parameterized by $k$, where $k$ is the number of triangles in which the edge costs violate the triangle inequality. In this paper, we design a $2.5$-approximation algorithm that runs in FPT time, improving the result of~\cite{zhou}.
\end{abstract}

\section{Introduction}
The Traveling Salesperson Problem (TSP) is a classic optimization problem in the field of computer science and operations research.  In (symmetric) TSP, we are given a complete undirected graph $G = (V, E)$ with $n$ vertices and an edge cost function $c: E \rightarrow \mathbb R_{\ge 0}$. The goal is to find a minimum-cost cycle visiting every vertex exactly once (Hamiltonian cycle).  TSP has many applications in the logistics sector, particularly in optimizing and planning delivery routes. Algorithms that solve TSP instances aid in reducing both travel time and costs. Unfortunately, TSP is an NP-hard problem~\cite{Karp, garey}; it is solvable in exponential time (in time $O(n^2 \cdot 2^n)$ by algorithms proposed independently by Bellman~\cite{bellman} and Held and Karp~\cite{held}), but not in subexponential time $2^{o(n)}$ under ETH~\cite{ImpagliazzoP01}.

With respect to (polytime) approximation algorithms, TSP does not admit any constant ratio approximation algorithm unless $P=NP$~\cite{teofilo}, but some classical subcases are well known to be easier to approximate. In particular, in the metric case (i.e., when costs of edges satisfy the triangle inequality $c(u,v)\leq c(u,w)+c(w,u)$), there exists a $3/2$-approximation algorithm in $O(n^3)$ time given independently by Christophides~\cite{Christofides} and Serdyukov~\cite{Ser78}. This ratio of 3/2 has been recently slightly improved~\cite{random_tsp, det_tsp}.  On the other hand, metric TSP does not admit a PTAS unless $P=NP$~\cite{PapadimitriouY93}. Arora~\cite{arora} and Mitchell~\cite{Mitchell} have shown that if we further restrict the instances to the Euclidean case, where vertices are points in the plane (or more generally points in a $d$-dimentional space) and distances are Euclidean distances, then the problem (known to remain $NP$-hard) admits a PTAS. The asymmetric version of the problem (i.e., TSP in directed complete graphs, where $c(u,v)$ may be different from $c(v,u)$) has been also widely studied. The question whether metric ATSP admits a constant approximation algorithm had been open for many years, and recently got a positive answer~\cite{SvenssonTV20}.  

%\paragraph*{Approximation algorithms}
%As the application of exact algorithms to solve the problem is very limited due to their extremely high running time, people often use heuristics or approximation algorithms. An algorithm is a $\rho$-approximation for a minimization problem when it outputs a feasible solution $\alg$ with $c(\alg) \le \rho \cdot c(\opt)$ in polynomial time, where $\opt$ is an optimal solution for the problem's instance and $\rho \ge 1$. Unfortunately, there is not a polynomial time $\rho(n)$-approximation algorithm for TSP for any computable function $\rho$, unless $P=NP$~\cite{teofilo}. Therefore, researchers have been focused on TSP instances with some special structure. The most known such case is the metric case. In metric TSP, it holds that $c(u,v) \le c(u,w)+c(w,v)$ for every $u,v,w \in V$, where $c(u, v)$ is the cost of the edge between vertices $u$ and $v$. Metric TSP remains $NP$-hard~\cite{Karp}, but there exists a $3/2$-approximation algorithm in $O(n^3)$ time given independently by Christophides~\cite{Christofides} and Serdyukov~\cite{Ser78}. Recently, the more than 40 years $3/2$-approximation bound was broken by Karlin, Klein and Gharan~\cite{random_tsp, det_tsp}. They proposed a $(3/2-\epsilon)$-approximation algorithm for metric TSP, for some constant $\epsilon > 10^{-36}$. For the special case of Euclidean TSP, a polynomial time approximation scheme (PTAS) was discovered independently by Arora~\cite{arora} and Mitchell~\cite{Mitchell}. 

In an attempt to generalize (symmetric) metric TSP instances to non-metric instances, two approaches have been mainly considered. The first one is to consider that only a relaxed version of the triangle inequality holds. Andreae and Bandelt~\cite{andreae} proposed to relax the triangle inequality in the following way: for some $\tau \ge 1$, we require that $c(u,v) \le \tau[c(u,w)+c(w,v)]$ for every $u,v,w \in V$. For these instances of TSP, they proposed a $(3\tau^2+\tau)/2$-approximation algorithm. Later, Bender and Chekuri~\cite{bender} improved this ratio for $\tau>7/3$ with a ratio of $4\tau$, and showed that getting ratios sublinear in $\tau$ is impossible unless $P=NP$. The ratio has been further improved for small values of $\tau$ down to $3(\tau+\tau^2)/4$ in~\cite{MOMKE2015866}. %In another line of work, Mohan~\cite{mohan} designed a $7/2$-approximation algorithm for a version of TSP where all vertices can be partitioned into two
%parts $V_1,V_2$ such that the induced subgraphs $G[V_1]$ and $G[V_2]$ are metric.\\ %However, not all TSP instances admit such a partition.

A second approach is to consider that triangle inequality is only verified in some parts of the instances. In that vein, Mohan~\cite{mohan} designed a $7/2$-approximation algorithm for a version of TSP where all vertices can be partitioned into two
parts $V_1,V_2$ such that the induced subgraphs $G[V_1]$ and $G[V_2]$ are metric. A recent work~\cite{zhou} defines some parameters that measure how large is the ``non-metric'' part of the instance, and then proposes  {\it parameterized approximation algorithms.} Parameterized complexity is a classic approach to solve $NP$-hard problems, the goal of which is to express the complexity as a function of both the instance size $|I|$ and a parameter $k$ on the instance. An FPT (for Fixed-Parameter Tractable) algorithm is an algorithm whose time complexity is upper bounded by $f(k)|I|^{O(1)}$. We refer the reader to some textbooks on parameterized complexity such as~\cite{FlumG06,CyganFKLMPPS15}. %When designing FPT algorithms is not possible (under some complexity assumption), one can combine relaxations on the running time (FPT instead of polytime) and on the quality of the solution (approximate instead of exact) and design parameterized approximation algorithms. 
For some problems, there exist exact algorithms with FPT time complexity, but for many others this is not possible, under plausible complexity theoretic assumptions. A complementary approach to address this intractability is to study parameterized approximation algorithms.
A parameterized $\rho$-approximation algorithm is a $\rho$-approximation  algorithm that runs in $f(k) \cdot |I|^{O(1)}$ time. Many problems have been studied under the lens of parameterized approximability. We refer the reader to the surveys for parameterized approximation algorithms of~\cite{marx} and~\cite{Feldmann}. The aforementioned PTAS of Arora~\cite{arora} in $d$-dimensional metric space can be seen as a parameterized PTAS, where the parameter is the dimension $d$ of the space, as it gives a $(1+\epsilon)$-approximation algorithm for metric TSP in time $f(d,\epsilon) \cdot n^2$ for some function $f$. 
Another example is the work by Bockenhauer et al.~\cite{Bockenhauer} who studied TSP with deadlines, in which a subset of $k$ vertices must be visited before a given time. They designed a parameterized $2.5$-approximation algorithm for the metric case running in $O(k! \cdot k + n^3)$ time.

As mentioned above, and very closely related to our work, Zhou, Li and Guo~\cite{zhou} introduced two parameters that measure the so-called “distance from approximability” for TSP. In TSP, the distance between the inapproximable and approximable case “translates” into the distance
between a general TSP instance and a metric TSP instance. First, they considered the number $k$ of triangles in a general TSP instance, which do not satisfy the triangle inequality (the edge costs of the triangle), and gave a $3$-approximation algorithm in time $O\big((3k)! 8^k \cdot n^2+n^3\big)$. Additionally, they introduced a second parameter $k'$ which measures the minimum number of vertices whose removal turns a general instance into a metric instance. They designed a $(6k'+9)$-approximation algorithm that runs in time $O(k'^{O(k')} \cdot n^3)$ and a $3$-approximation algorithm that runs in  $O(n^{O(k')})$ time. Note that it holds that $k' \le k$.
 
\paragraph*{Our contribution}
In this paper, we revisit the problem of designing parameterized approximation algorithms for TSP using a parameter that measures the distance between a general TSP instance and a metric one, as introduced in~\cite{zhou}. More specifically, we consider the parameter $k$ which measures the number of triangles that violate the triangle inequality, and design an improved $2.5$-approximation algorithm that runs in time $O\big((3k)! 8^k \cdot n^4\big)$.

 Our algorithm first follows almost the same steps as in the $3$-approximation algorithm of Zhou, Li and Guo~\cite{zhou}. Roughly speaking, they build a specific connected spanning subgraph of the initial graph, and then double some edges to get an Eulerian graph. From an Euler tour of this graph they build a Hamiltonian cycle. We point out a crucial difficulty here: as the graph is not metric, transforming an Euler tour into a Hamiltonian cycle may increase significantly its cost; to avoid this, shortcutting should only be done on triangles satisfying triangle inequality. In our algorithm, instead of doubling the edges to create an Eulerian graph, we compute a minimum-cost perfect matching of a specific subgraph, drawing ideas from the Christofides' algorithm~\cite{Christofides}. We point again a difficulty due to non metricity: the usual bound on the matching cost does not directly follows, as in non metric instances the best Hamiltonian cycle of a subgraph can be larger that the one of the initial graph. Adding the matching to the spanning subgraph gives an Eulerian graph. To avoid the first shortcutting problem mentioned above,
%Getting a Hamiltonian (TSP) tour from an Euler tour is not trivial, since triangle inequality does not hold everywhere. For example, if in the Euler tour we have $v, u, u, w$ for some $u, v, w \in V$ which belong to violating triangles, then we cannot get rid of the double occurrence of vertex $u$ with shortcutting. For that reason, 
we have an additional step between the construction of the Euler tour and shortcutting vertices that are visited more than once. This step then allows us to obtain a Hamiltonian cycle by carefully handling the  shortcutting phase. 
%is handled carefully to ensure that the triangle inequality holds when it is necessary. 

We also note that our algorithm (like the algorithm in~\cite{zhou}) runs in FPT time $O(k_T!2^{k_T}n^4)$ for the more intuitive parameter $k_T$ equal to the number of vertices that appear in at least one triangle where the triangle inequality is violated. This parameter can be much smaller than the number of the violating triangles. In what follows, we first introduce some notation in Section~\ref{preliminaries}, and then give the description of the algorithm and the analysis of its time complexity and approximation ratio (Section~\ref{analysis}).

\section{Preliminaries} \label{preliminaries}
In this section, we give some basic definitions and notation. We follow the notation of~\cite{zhou}. Throughout the paper, we consider a simple undirected complete graph $G(V,E)$ with an edge cost function $c:E \rightarrow \mathbb R_{\ge 0}$. If for every $u, v, w \in V$ it holds that $c(u,v) \le c(u,w)+c(v,w)$, then the graph $G$ is called metric. When we refer to a violating triangle $\Delta(u,v,w)$, we mean that at least one of $c(u,v) \le c(u,w)+c(v,w)$, $c(u,w) \le c(u,v)+c(v,w)$ and $c(v,w) \le c(v,u)+c(w,u)$ does not hold. For an edge subset $E' \subseteq E$, the cost of $E'$ is the total sum of the cost of its edges and is denoted by $c(E')$. The vertex set of $E'$ is denoted by $V(E')$. For a vertex subset $V' \subseteq V$, $E(V')$ is the set of edges that connects two vertices in $V'$, and $G[V']=(V',E(V'))$ is the subgraph of $G$ induced by $V'$. For a positive integer $i$, set $[i]= \{ 1, 2,\dots,i\}$.

A $t$-forest is an acyclic graph consisting of $t$ disjoint trees. When a spanning subgraph of $G$ is a $t$-forest, it is called a spanning $t$-forest of $G$. A spanning $t$-forest that has the minimum cost is a $t$-minimum spanning forest ($t$-MSF) of $G$.

In this work, we study the TSP problem in which $k$ triangles violate the triangle inequality. This parameter $k$ can be easily computed in $O(n^3)$ time by considering all triangles of the input graph. We call a vertex “bad” if it is contained in at least one of the $k$ violating triangles, and we denote by $V^b$ the set of all bad vertices, with $|V^b| \le 3k$. The rest of the vertices are called “good” and the set of them is denoted by $V^g$. For $b_1, b_2, b_3 \in V^b$, the triangle formed by these three vertices $\Delta(b_1,b_2,b_3)$ may violate the triangle inequality. For example, $c(b_1,b_3) + c(b_2,b_3)$ can be arbitrary less than $c(b_1,b_2)$. Additionally, in any triangle $\Delta(g,u,v)$ that contains at least one “good” vertex $g \in V^g$, with $u,v \in V$, all triangle inequalities between $g, u, v$  hold.

For the rest of the paper, we denote the length of a TSP tour output by an algorithm $\alg$ by $c(\alg)$ and that of an optimal (minimum length) TSP solution $\opt$ by $c(\opt)$.

\section{Parameterized Approximation for TSP} \label{analysis}
In this section, we will describe our $2.5$-approximation algorithm for TSP in FPT time, parameterized by the number of violating triangles $k$. Our algorithm first follows the initial steps of the $3$-approximation algorithm of~\cite{zhou}, and then carefully constructs a Hamiltonian cycle of all $n$ vertices. Let us first give a rough description of the main steps of our algorithm, which will be given in more details in Section~\ref{subsec:4.1}.

As in~\cite{zhou}, the first step involves "guessing" the positions of bad vertices in an optimal TSP solution, specifically their order and the "gaps" between them where good vertices should be placed. This is achieved by enumerating all possible permutations of the bad vertices and, for each permutation, all partitions that respect the order of the permutation. In other words, the bad vertices in each subset of a partition appear together in the optimal TSP solution and follow the order dictated by the permutation. This means that the bad vertices in one subset of the partition appear together in the optimal TSP solution and respect the order of the permutation. Additionally, the subsets occur in the order of the permutation, and at least one good vertex is contained between any two consecutive subsets. Then, we compute a minimum-cost spanning forest of only good vertices rooted at the (bad) end-vertices of the subsets of the partition. In this way, we fill in the gaps between the subsets in each partition of permutations with good vertices. A new graph $G'$ is created, that contains (1) a Hamiltonian cycle of bad vertices which follows the order in which an optimal TSP visits them, and (2) the minimum-cost spanning forest that has been computed. %a collection of trees, each one of them rooted at a (bad) end-vertex of the subsets of the partition, which contains all good vertices of graph $G$.
Next, following the general idea of the Christofides' algorithm we find a minimum-cost perfect matching on the vertices of odd degree in the built graph $G'$. By adding this perfect matching to $G'$ we get a graph $H$ with only even-degree vertices. We then make some local modifications, without increasing the total cost of edges, that will eventually allow us to build a Hamiltonian cycle the cost of which is not larger than the one of an Euler tour of $H$.

% Note that only good vertices or bad end-vertices of the subsets of the partition can have odd degree \textcolor{blue}{I am not sure that this sentence make sense here, I mean it is not possible to get it at this stage I think. The same for double edges. I propose to replace these two sentences by the sentence below.}, since all bad vertices form a simple cycle. If the formed multigraph contains double edges between two bad vertices, we replace each double edge with a simple one and add another edge that connects one of two bad vertices with a good one without increasing the total cost of the multigraph, as we will explain in the next section.  Finally, we construct a Eulerian tour in the new formed multigraph, we carefully skip repeated vertices (shortcutting) and output a TSP tour of all $n$ vertices. 

We introduce some additional notation, following the ones in~\cite{zhou}, that will help us describe our algorithm. A bad chain is one bad vertex or a path consisting of distinct bad vertices. It is denoted by $q=(b_1,b_2,\dots,b_{\ell})$ with $\ell \ge 1$, where $b_i \in V^b$ for $i \in [\ell]$ and there exists an edge connecting $b_i$ and $b_{i+1}$ for each $i \in [\ell-1]$.
We use $b_s(q)$, $b_e(q)$ to refer to the starting and ending vertices, respectively, and use $c(q)$ to
denote the total cost of the edges in $q$.

Moreover, an optimal TSP solution $\opt$ can be decomposed into an ordered collection of $2 t^{opt}$ many paths $q_1^{opt}, p_1^{opt}, q_2^{opt}, p_2^{opt}, \dots, p_{t^{opt}}^{opt}$ with $q_i^{opt}$ being a bad chain and $p_i^{opt} = (b_e(q_i^{opt}), g_{i_1}^{opt}, \dots, g_{i_{\ell_i^{opt}}}^{opt}, b_s(q_{i+1}^{opt}))$ being the path connecting $b_e(q_i^{opt})$ and $b_s(q_{i+1}^{opt})$ with only good internal vertices. Here $\ell_i^{opt} \ge 1$ and $q_{t^{opt}+1}^{opt}= q_1^{opt}$. Of course $t^{opt} \le 3k$.

\subsection{Algorithm and Time Complexity}\label{subsec:4.1}
We are now ready to give the description of Algorithm~\ref{algo}. The algorithm uses \textsc{ShortCut} (see Algorithm~\ref{shortcut}) as a subroutine  (see Figure~\ref{fig:algorithm-1} and~\ref{fig:shortcut} for a graphic illustration). Note that we assume that there exists at least one good vertex in graph $G$, otherwise the problem becomes trivial and Algorithm~\ref{algo} outputs after Step~2(a) an optimal TSP solution.

\begin{algorithm}[ht]
	\KwIn{$G(V,E)$ with cost function $c:E \rightarrow \mathbb R_{\ge 0}$}
    \begin{enumerate}
        \item Compute all $k$ triangles that violate the triangle inequality, the set of bad vertices\\ $V^b$ and the set of good vertices $V^g$.

        \item Enumerate all possible permutations of bad vertices. For each permutation,\\ enumerate all possible partitions of bad vertices into $t$ subsets for each $t \in [|V^b|]$, respecting the corresponding order of permutation.\\ For each $t$-partition, do the following:

        \begin{enumerate}[label=(\alph*)]
            \item Connect the bad vertices in each subset of the partition in the permutation\\ order, creating $t$ bad chains $Q=(q_1,\dots,q_t)$ according to the order of the permutation. Additionally, connect $q_i$ with $q_{i+1}$
            with edge $(b_e(q_i),b_s(q_{i+1}))$,\\ for each $i \in [t]$ with $q_{t+1}=q_1$, resulting in one simple cycle  $C$ of all bad vertices.

            \item Compute a $t$-minimum spanning forest ($t$-MSF) $F=(T_1,\dots,T_t)$ of  $G[V^g \cup \{b_e(q_i)|i \in
             [t]\}]$ rooted at $\{b_e(q_i)|i \in [t]\}$.

             \item Combine the edges of $C$ and $F$ to form a connected graph $G'$ of all vertices in\\ $V$. Let $O$ be the set of vertices with odd degree in $G'$. 

            \item Find a minimum-cost perfect matching $M$ in the subgraph induced in $G$ by $O$.

            \item Combine the edges of $G'$ and $M$ to form a connected multigraph $H$ in which\\ each vertex has even degree.

            \item Call \textsc{ShortCut}($H$) and output a TSP tour of $G$.
        \end{enumerate}
    \end{enumerate}
    \caption{$2.5$-approximation Algorithm for TSP parameterized by $k$}
    \label{algo}
\end{algorithm}

\begin{algorithm}[ht]
	\KwIn{Multigraph $H$ with vertices $V=V^g \cup V^b$}
    \begin{enumerate}
        \item While there exists a double edge between vertices $b_e(q_i^{opt})$ and $b_e(q_{i+1}^{opt})$ for $i \in [t^{opt}]$\\ with $q_1 = q_{t{opt}}$, do the following: 

        \begin{itemize}
            \item Replace it with a simple edge that connects the two vertices. 
            
            \item If there is a good vertex $g^{alg}_{i}$ that is adjacent to $b_e(q_i^{opt})$, remove the edge \\  $(b_e(q_i^{opt}), g^{alg}_{i})$, and add the edge  $(g^{alg}_{i}, b_e(q_{i+1}^{opt}))$.

            \item If there is not such vertex, then 
        find a good vertex $g^{alg}_{i+1}$ that is adjacent to \\ $b_e(q_{i+1}^{opt})$, remove the edge $(b_e(q_{i+1}^{opt}), g^{alg}_{i+1})$, and add the edge $(g^{alg}_{i+1}, b_e(q_{i}^{opt}))$.
        \end{itemize}

        \item Form an Eulerian tour $T^H$ in $H$.

        \item Consider sequence $s$ of the visited vertices in $T^H$. Skip every repeated bad vertex\\ $b \in V^b$ that has at least one adjacent good vertex (the first vertex is adjacent to\\ the last one in $s$) until all bad vertices are unique.

        \item Skip all repeated good vertices and output a Hamiltonian cycle.

    \end{enumerate}
    \caption{\textsc{ShortCut}($H$)}
    \label{shortcut}
\end{algorithm}

Let us first state a lemma that will help us compute the time complexity of Algorithm~\ref{algo}.

\begin{lemma}[Lemma~1 in~\cite{zhou}]
    Given a graph $G=(V,E)$ and a subset $V'$ of $V$, a minimum spanning forest of $G$ rooted at $V'$ can be computed in $O(|V|^2)$ time.
\end{lemma}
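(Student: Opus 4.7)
The plan is to reduce the computation of an MSF rooted at $V'$ to a single MST computation on a slightly enlarged graph. I introduce a new auxiliary vertex $r \notin V$ and form the graph $G^{+} = (V \cup \{r\}, E \cup \{(r,v) : v \in V'\})$, giving cost $0$ to each new edge and keeping the original costs unchanged. Since $G^{+}$ has $|V|+1$ vertices, Prim's algorithm with an array-based priority queue computes an MST $T^{+}$ of $G^{+}$ in $O(|V|^{2})$ time, which already matches the target bound.

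Next I argue that one can always obtain an MST of $G^{+}$ in which $r$ is adjacent to every vertex of $V'$. Given any MST, if some $(r,v)$ with $v \in V'$ is absent, adding this edge creates a unique cycle. Since every edge incident to $r$ in $G^{+}$ goes to $V'$, this cycle must use another edge $(r,v')$ with $v' \in V'$, together with a path from $v$ to $v'$ inside $V$. If that path contained a positive-cost edge, swapping it for the zero-cost $(r,v)$ would strictly improve the MST, a contradiction; so every edge of the path has cost $0$, and I can delete one of them and add $(r,v)$ without changing the total cost. Iterating this swap removes all missing edges of the form $(r,v)$, producing an MST $T^{+}$ that contains $\{(r,v) : v \in V'\}$.

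Given such $T^{+}$, deleting $r$ yields a spanning subgraph $F$ of $G$ with exactly $|V'|$ components, each containing a single vertex of $V'$, so $F$ is a spanning forest of $G$ rooted at $V'$ with $c(F) = c(T^{+})$. For optimality, any spanning forest $F'$ of $G$ rooted at $V'$ can be extended to a spanning tree of $G^{+}$ of cost $c(F')$ by adjoining all zero-cost edges $\{(r,v) : v \in V'\}$; hence $c(F) = c(T^{+}) \le c(F')$, which proves that $F$ is indeed an MSF rooted at $V'$.

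The main (and essentially only) non-trivial step is the exchange argument guaranteeing an MST in which every edge incident to $r$ is used; once this is in place, both correctness and the $O(|V|^{2})$ running time follow from Prim's algorithm and standard MST theory.
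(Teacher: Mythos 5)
Your argument is correct. Note, however, that the paper you are working from does not actually prove this statement: it is imported verbatim as Lemma~1 of~\cite{zhou} and used as a black box, so there is no in-paper proof to compare against. Your reduction --- adjoin an auxiliary root $r$ joined to each $v \in V'$ by a zero-cost edge, compute an MST of the augmented graph with Prim's algorithm in $O(|V|^2)$, and then use cycle exchanges to force every edge $(r,v)$, $v \in V'$, into the tree --- is a clean, self-contained justification, and it is essentially equivalent to the standard treatment (which either contracts $V'$ to a supernode or, what amounts to the same thing, runs Prim's algorithm initialized with all of $V'$ already in the tree, i.e.\ with key $0$ for every root). Two small points worth making explicit if you write this up: (i) each exchange strictly increases the number of tree edges incident to $r$ and costs $O(|V|)$ to locate the fundamental cycle, so the post-processing stays within $O(|V|^2)$ and terminates after at most $|V'|$ rounds; and (ii) the construction tacitly assumes every connected component of $G$ contains a vertex of $V'$ (otherwise no rooted spanning forest exists at all), which is harmless here because the lemma is only invoked on induced subgraphs of a complete graph.
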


\begin{theorem}
        Algorithm~\ref{algo} runs in $O\big((3k)! 8^k \cdot n^4\big)$ time. 
\end{theorem}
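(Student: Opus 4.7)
The plan is to multiply the number of outer iterations of Step~2 by the worst-case cost of a single iteration, noting that Step~1 contributes only $O(n^3)$ via a brute-force scan of all triples. Since $|V^b| \le 3k$, there are at most $(3k)!$ permutations of the bad vertices. For each permutation of $m := |V^b|$ bad vertices, a partition into $t$ ordered subsets that respects the permutation is determined by choosing which of the $m-1$ gaps between consecutive vertices to cut, so the total number of such partitions, summed over $t \in [m]$, equals $2^{m-1}$. Since $m \le 3k$, this is at most $2^{3k-1} \le 8^k$, and the outer loop therefore executes $O((3k)! \cdot 8^k)$ times in total.

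Inside a single iteration most steps are cheap: Steps 2(a), 2(c), and 2(e) are linear in the size of the current structures, while Step 2(b) invokes the cited lemma and runs in $O(n^2)$. Since $G'$ has at most $|V^b|$ edges from $C$ and at most $n-1$ edges from the spanning forest, and since the matching contributes at most $n/2$ additional edges, the multigraph $H$ has $O(n)$ edges. Consequently the Eulerian tour computation and both shortcutting scans in \textsc{ShortCut} run in $O(n)$ time, while the local modification loop iterates at most $O(n)$ times, for a total of $O(n^2)$. The dominant cost is Step 2(d): a minimum-cost perfect matching in a general graph on $O(n)$ vertices, which can be computed in $O(n^4)$ time by a classical implementation of Edmonds' blossom algorithm.

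Combining the bound of $O((3k)! \cdot 8^k)$ iterations with the $O(n^4)$ per-iteration cost, and absorbing the $O(n^3)$ preprocessing, yields the claimed running time of $O((3k)! \cdot 8^k \cdot n^4)$. The main subtlety worth flagging is the counting of order-respecting partitions, which must be identified with compositions of $|V^b|$ so that their number is $2^{|V^b|-1} \le 8^k$; the rest reduces to standard complexity bounds for well-known subroutines, with the minimum-cost perfect matching being the per-iteration bottleneck.
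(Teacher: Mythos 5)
Your proposal is correct and follows essentially the same argument as the paper: $(3k)!$ permutations times at most $2^{|V^b|} \le 8^k$ order-respecting partitions for the outer loop, with the $O(n^4)$ minimum-cost perfect matching via Edmonds' algorithm as the per-iteration bottleneck and all other steps (including \textsc{ShortCut}) being cheaper. Your refinement of the partition count to $2^{|V^b|-1}$ via compositions is slightly tighter than the paper's $2^{|V^b|}$ but does not change the stated bound.
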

\begin{proof}
    Step~1 computes $k$ violating triangles in $O(n^3)$ time. The number of $t$-partitions of permutations enumerated in Step~2 is $|V^b|! 2^{|V^b|} = O\big((3k)! 8^k\big)$. For each possible $t$-partition, Step~2(a) to Step~2(e) take in total $O(n^4)$. It is easy to see that the computation of a minimum-cost perfect matching is the most time-consuming step of Algorithm~\ref{algo}. We can compute a minimum-cost perfect matching in time $O(n^4)$ using Edmonds' algorithm~\cite{Edmonds1, Edmonds2}. Note that Algorithm~\ref{shortcut} runs in linear time. Consequently, Algorithm~\ref{algo} runs in $O\big((3k)! 8^k \cdot n^4 \big)$ time. 
\end{proof}

\subsection{Analysis of Approximation Ratio}
In this section, we prove the correctness of Algorithm~\ref{algo} and show that it has an approximation ratio of $2.5$. First, we have that for an optimal TSP solution $c(\opt) = \sum_{i=1}^{t^{opt}}c(q_i^{opt}) + \sum_{i=1}^{t^{opt}}c(p_i^{opt})$.

Since we enumerate all partitions of every permutation of bad vertices in Step~2 of Algorithm~\ref{algo}, there exists an enumeration case, where we have $t^{opt}$ bad chains which contain exactly the same bad vertices in the same order as in $\opt$. In the following, we focus on the case when the permutation of bad vertices (with the right “gaps” of good vertices) in our algorithm and $\opt$ match. For a graphic illustration of Steps~2(a) and~2(b) of Algorithm~\ref{algo} see Figure~\ref{fig:algorithm-1}.

First, we compute the cost of the cycle $C$ of bad vertices formed by Algorithm~\ref{algo} after Step~2(a). The cost of $C$ is 
\begin{equation} \label{eq_a1}
    c(C) = \sum_{i=1}^{t^{opt}}c(q_i^{opt}) + \sum_{i=1}^{t^{opt}}c(b_e(q_i^{opt}), b_s(q_{i+1}^{opt})).
\end{equation}
To prove that the cost of the cycle of bad vertices $C$ is upper bounded by $c(\opt)$ we use a lemma stated in~\cite{zhou}. In the following, we give its full proof for completeness.

\begin{lemma}[Lemma~3 in~\cite{zhou}] \label{lemma_a}
    For each $i \in [t^{opt}]$, $c(b_e(q_i^{opt}), b_s(q_{i+1}^{opt})) \le c(p_i^{opt})$, where $q_{t^{opt}+1}^{opt} = q_1^{opt}$.
\end{lemma}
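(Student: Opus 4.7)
The plan is to prove the inequality by repeatedly applying triangle inequality to shortcut the path $p_i^{opt}$. The key observation is that, by the definition in the Preliminaries, any triangle containing at least one good vertex satisfies all three triangle inequalities. Since every internal vertex of $p_i^{opt}$ is good, we can always shortcut whenever the apex of the triangle we remove is good.

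More concretely, write $p_i^{opt} = (b, g_1, g_2, \dots, g_\ell, b')$, where $b = b_e(q_i^{opt})$ and $b' = b_s(q_{i+1}^{opt})$ are bad, and $g_1, \dots, g_\ell \in V^g$. I would proceed by induction on $\ell$ (the number of good internal vertices) to show that
\[
c(b, g_\ell) \;\le\; c(b, g_1) + \sum_{j=1}^{\ell-1} c(g_j, g_{j+1}).
\]
The base case $\ell = 1$ is trivial. For the inductive step, apply the triangle inequality in $\Delta(b, g_{\ell-1}, g_\ell)$, which is valid because $g_{\ell-1}$ (in fact both $g_{\ell-1}$ and $g_\ell$) is good, hence $c(b, g_\ell) \le c(b, g_{\ell-1}) + c(g_{\ell-1}, g_\ell)$, and combine with the induction hypothesis.

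Having established the bound on $c(b, g_\ell)$, I would finish by one last application of triangle inequality in $\Delta(b, g_\ell, b')$. This triangle contains the good vertex $g_\ell$, so the inequality $c(b, b') \le c(b, g_\ell) + c(g_\ell, b')$ holds. Chaining this with the previous bound yields
\[
c(b, b') \;\le\; c(b, g_1) + \sum_{j=1}^{\ell-1} c(g_j, g_{j+1}) + c(g_\ell, b') \;=\; c(p_i^{opt}),
\]
which is exactly the statement of the lemma.

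There is no real obstacle here: the only subtle point is to verify that every triangle we invoke contains at least one good vertex so that the triangle inequality can be applied. That is guaranteed because the apexes $g_1, \dots, g_\ell$ removed during the shortcutting process are all in $V^g$. A short sentence should suffice to highlight this. One small edge case to mention is $\ell = 1$: only a single triangle $\Delta(b, g_1, b')$ is used, which contains the good vertex $g_1$, so the inequality again holds directly.
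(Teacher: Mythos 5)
Your proof is correct and follows essentially the same route as the paper: both arguments repeatedly shortcut along the path $p_i^{opt}$ using the triangle inequality, noting each time that the triangle used contains at least one good vertex. Your version merely organizes the chain of inequalities as an explicit induction from one end of the path, whereas the paper expands the middle segment in one step; the substance is identical.
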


\begin{proof}
    Every triangle that contains at least one good vertex satisfies the triangle inequality (all three of them between the three vertices of the triangle). Therefore, using repeatedly the triangle inequality we have that for each $i \in [t^{opt}]:$
    \begin{align*}
        c(b_e(q_i^{opt}), b_s(q_{i+1}^{opt})) \le c(b_e(q_i^{opt}), g_{i_1}^{opt}) + c(g_{i_1}^{opt} , b_s(q_{i+1}^{opt})) \\
        \le c(b_e(q_i^{opt}), g_{i_1}^{opt}) + c(g_{i_1}^{opt}, g_{i_{\ell_i^{opt}}}^{opt})+ c(g_{i_{\ell_i^{opt}}}^{opt} , b_s(q_{i+1}^{opt})) \\
        \le c(b_e(q_i^{opt}), g_{i_1}^{opt}) + c(g_{i_1}^{opt},g_{i_2}^{opt}, \dots, g_{i_{\ell_i^{opt}}}^{opt})+ c(g_{i_{\ell_i^{opt}}}^{opt} , b_s(q_{i+1}^{opt})) \\
        \le c(p_i^{opt}).
    \end{align*}
\end{proof}
From~(\ref{eq_a1}) and Lemma~\ref{lemma_a} we get that after Step~2(a): $c(C) \le c(\opt)$.

After Step~2(b), we obtain a $t^{opt}$-minimum spanning forest ($t^{opt}$-MSF) $F=(T_1,\dots,T_{t^{opt}})$ of  $G[V^g \cup \{b_e(q_i)|i \in [t^{opt}]\}]$ rooted at $\{b_e(q_i)|i \in [t^{opt}]\}$. As shown in Lemma~5 of~\cite{zhou}, it holds that $c(F) \le c(\opt)$. Here again, we give the full proof for completeness.

\begin{lemma}[Lemma~5 in~\cite{zhou}]
    $c(F) \le c(\opt)$.
\end{lemma}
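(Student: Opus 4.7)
The plan is to exhibit a specific spanning forest of $G[V^g \cup \{b_e(q_i^{opt}) | i \in [t^{opt}]\}]$ rooted at $\{b_e(q_i^{opt}) | i \in [t^{opt}]\}$ whose cost is at most $c(\opt)$, and then invoke minimality of $F$.

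To build this forest, I would start from $\opt$ itself, viewed as the ordered decomposition $q_1^{opt}, p_1^{opt}, \ldots, q_{t^{opt}}^{opt}, p_{t^{opt}}^{opt}$, and delete two kinds of edges. First, I delete all edges lying inside any bad chain $q_i^{opt}$ (these use vertices in $V^b \setminus \{b_e(q_i^{opt})\}$, which are not in the induced subgraph anyway). Second, for each $i \in [t^{opt}]$, I delete the last edge of the path $p_i^{opt}$, namely $(g_{i_{\ell_i^{opt}}}^{opt}, b_s(q_{i+1}^{opt}))$. What remains is a collection of $t^{opt}$ vertex-disjoint paths, the $i$-th one being $(b_e(q_i^{opt}), g_{i_1}^{opt}, \ldots, g_{i_{\ell_i^{opt}}}^{opt})$.

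The next step is to observe that this collection forms a valid rooted spanning forest of $G[V^g \cup \{b_e(q_i^{opt}) | i \in [t^{opt}]\}]$: each path is a tree, the paths are disjoint since $\opt$ is a Hamiltonian cycle, every good vertex appears in exactly one $p_i^{opt}$ and hence in exactly one of these paths, and every root $b_e(q_i^{opt})$ is the endpoint of exactly one path. Moreover, all edges used lie in the induced subgraph since each one connects either two good vertices or a good vertex with a root $b_e(q_i^{opt})$.

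Finally, bounding the cost: the cost of the constructed forest equals $\sum_{i=1}^{t^{opt}} c(p_i^{opt}) - \sum_{i=1}^{t^{opt}} c(g_{i_{\ell_i^{opt}}}^{opt}, b_s(q_{i+1}^{opt}))$, which is at most $\sum_{i=1}^{t^{opt}} c(p_i^{opt}) \le c(\opt)$ because edge costs are non-negative. Since $F$ is a minimum-cost spanning forest of the same induced subgraph rooted at the same set, $c(F) \le c(\opt)$. There is no real obstacle here; the only thing to be careful about is ensuring that the deleted edges leave a legitimate forest (no cycles, correct roots, spans the right vertex set), which follows immediately from the path structure of each $p_i^{opt}$ and the Hamiltonicity of $\opt$.
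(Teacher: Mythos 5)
Your proof is correct and follows essentially the same route as the paper: both construct the candidate forest by pruning $\opt$ down to the paths $(b_e(q_i^{opt}), g_{i_1}^{opt}, \ldots, g_{i_{\ell_i^{opt}}}^{opt})$ and then invoke minimality of $F$ together with non-negativity of edge costs. The only cosmetic difference is that you delete edges directly and uniformly, whereas the paper deletes the non-root bad vertices with their incident edges and handles singleton bad chains as a special case; the resulting forest is the same.
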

\begin{proof}
    It is easy to see that a $t^{opt}$-spanning forest of  $G[V^g \cup \{b_e(q_i)|i \in [t^{opt}]\}]$ rooted at $\{b_e(q_i)|i \in [t^{opt}]\}$ can be formed by removing some edges and/or vertices from an optimal solution $\opt$. To show that, we first remove all bad vertices not in $\{b_e(q_i)|i \in [t^{opt}]\}$ and all edges incident to them from $\opt$. Then, for each $i \in [t^{opt}]$, if $q_i^{opt}$ consists of only one bad vertex, then we remove the edge connecting it with its preceding good vertex. Thus, we get the desired $t^{opt}$-spanning forest which has cost greater than or equal to the cost of the $t^{opt}$-minimum spanning forest $F$.
\end{proof}

\begin{figure}[ht]
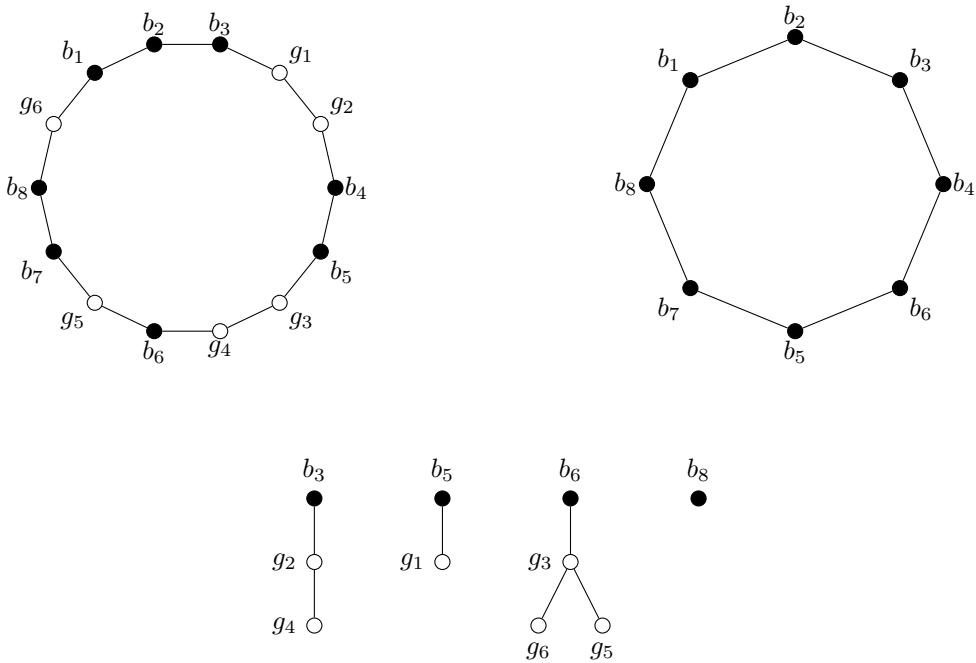

    \centering
  \begin{subfigure}{6cm}
    \centering\includegraphics[width=5cm]{figure_1}
  \end{subfigure} \hspace*{\fill}
  \begin{subfigure}{6cm}
    \centering\includegraphics[width=5cm]{figure_2}
  \end{subfigure}
  \begin{subfigure}{6cm}
    \vspace{1cm}
    \centering\includegraphics[width=6cm]{figure_3}
  \end{subfigure}
  \caption{An illustration of Steps~2(a) and~2(b) of Algorithm~\ref{algo}. $b_i$'s are bad vertices and $g_i$'s are good vertices. Top left: the cycle represents an optimal TSP solution. The algorithm “guesses” the occurrence order of bad vertices and the “gaps” between bad vertices in the optimal TSP solution, but the position of good vertices in the solution remains unknown. Top right: In Step~2(a), Algorithm~\ref{algo} creates a simple cycle $C$ of only bad vertices following their order in the optimal solution. Bottom: The algorithm computes a $4$-minimum spanning forest ($t=4$, as the number of bad chains in the optimal TSP solution) of good vertices rooted at the bad end-vertices of each chain ($b_3, b_5, b_6, b_8$). Note that in this example, the $4$-minimum spanning forest does not connect any good vertex to $b_8$.  %More specifically, the spanning forest created in this example, is the following: $F=(T_1=\{b_3, g_2, g_4\}, T_2=\{b_5,g_1\}, T_3=\{g_3,g_5,g_6\}, T_4=\{\})$.
  }
  \label{fig:algorithm-1}
\end{figure}

Next, in Step~2(c), a connected graph $G'$ that contains all vertices $V$ of $G$ and the edges of $C$ and $F$ is created. Note that the number of odd degree vertices $O$ in $G'$ is even. Thus, there exists a perfect matching in the subgraph induced in $G$ by $O$. Note that the vertices of $O$ are either good vertices or bad vertices $\{b_e(q_i)|i \in [t^{opt}]\}$ that are roots for the spanning forest $F$, as each (bad) vertex that is not contained in $F$ has degree exactly $2$, by construction.

In Step~2(d), the algorithm computes a minimum-cost perfect matching $M$ in the subgraph induced in $G'$ by $O$. We will show that $c(M) \le c(\opt)/2$.

\begin{lemma}
     $c(M) \le c(\opt)/2$.
\end{lemma}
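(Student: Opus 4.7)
The plan is to mimic the usual Christofides-style bound: construct a Hamiltonian cycle $H$ on $O$ with $c(H) \le c(\opt)$, then conclude $c(M) \le c(H)/2 \le c(\opt)/2$ by decomposing $H$ (which has even length, since $|O|$ is always even) into two perfect matchings of $O$ summing to $c(H)$ and taking the cheaper. The case $O = \emptyset$ is trivial.

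The only non-trivial step is constructing $H$. In the metric case one shortcuts $\opt$ directly onto $O$, but here the triangle inequality may fail on an all-bad triangle, so a naive shortcut can inflate the cost. I would therefore shortcut $\opt$ onto $O$ in three successive phases, ordered so that every shortcut removes a vertex $s$ from a consecutive triple $(u,s,v)$ of the current cycle in which at least one of the three vertices is good (so the triangle inequality applies). Recall that $O$ consists only of good vertices and of certain $b_e(q_i^{opt})$'s (those with odd degree in $F$), and that each $p_i^{opt}$ contains $\ell_i^{opt} \ge 1$ good vertices.

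Phase~1 removes from $\opt$, for every chain $q_i^{opt}$, every vertex other than $b_e(q_i^{opt})$, one at a time in $\opt$-order, using the last good vertex of $p_{i-1}^{opt}$ as a fixed left anchor for the triangle inequality (it exists since $\ell_{i-1}^{opt} \ge 1$); by iterating this exactly as in the proof of Lemma~\ref{lemma_a}, the resulting cycle $W_1$ on $V(F) = V^g \cup \{b_e(q_i^{opt})\}$ has cost at most $c(\opt)$ and, crucially, each $b_e(q_i^{opt})$ is flanked in $W_1$ by two good vertices. Phase~2 shortcuts each $b_e(q_i^{opt}) \notin O$; since such a vertex is sandwiched between two good vertices and removing it only replaces the pattern good--bad--good by good--good, the good-flanked property of the remaining bad vertices is preserved throughout, so every shortcut triangle still contains a good vertex. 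Phase~3 shortcuts every good vertex not in $O$; the removed vertex is itself good, so the triangle inequality is automatic, and when an entire good segment between two bad vertices of $O$ is eliminated, a telescoping application of the triangle inequality along the successive good vertices of the segment bounds the new direct bad--bad edge by the cost of the original segment.

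The resulting cycle $H$ visits exactly $O$ and satisfies $c(H) \le c(\opt)$, which yields the claim. The main obstacle is the order of the phases: if one shortcut the non-$O$ good vertices before the non-$O$ $b_e(q_i^{opt})$'s, two bad vertices could become adjacent in the cycle with no good neighbor, blocking any further cost-preserving shortcut; the chosen order precisely avoids this situation.
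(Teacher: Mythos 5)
Your proposal is correct and follows essentially the same route as the paper: build a Hamiltonian cycle on $O$ of cost at most $c(\opt)$ by shortcutting $\opt$ so that every applied triangle inequality involves a good vertex, then split that even-length cycle into two perfect matchings and keep the cheaper one. Your three-phase ordering of the shortcuts is just a more explicit organization of the paper's gap-by-gap argument, which anchors each shortcut at a good vertex lying between (and adjacent to) consecutive odd-degree vertices.
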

\begin{proof}
    Consider the optimal TSP tour (cycle) $\opt$ and the subset $O$ of vertices that have odd degree in $G'$. Note that all odd degree vertices are either good vertices or bad end-vertices $b_e(q_i^{opt})$. Every consecutive pair of bad end-vertices in $\opt$ has at least one good vertex between them and adjacent to one of them in $\opt$. Let us denote by $(o_1, o_2, \dots, o_{|O|})$ the ordered-set of odd degree vertices in $G'$ in the order of $\opt$. We will now shortcut $\opt$ to get a cycle consisting of only vertices of odd degree.
    We can write $\opt$ as follows:
    \begin{equation*}
        \opt = (o_1, u_{1_1},\dots, u_{1_{\ell_1}}, o_2, \dots, o_{|O|}, u_{|O|_{1}}, \dots,  u_{|O|_{\ell_{|O|}}}).
    \end{equation*}
    We have two cases for a pair of consecutive odd degree vertices $(o_{i}, o_{i+1})$:
    \begin{enumerate}
        \item If $o_i$ or $o_{i+1}$ is a good vertex, then we can apply the triangle inequality repeatedly using that good vertex and skip all vertices between $o_i$ and $o_{i+1}$ in $\opt$ without increasing the cost.
        \item If both $o_i$ and $o_{i+1}$ are bad vertices, then as we explained there is a good vertex $g$ between them and adjacent to $o_i$ or $o_{i+1}$. Using that good vertex we can again repeatedly apply triangle inequality and skip all vertices between $o_i$ and $o_{i+1}$ in $\opt$ without increasing the cost. 
    \end{enumerate}

    Using triangle inequality in the way we described, we can get a simple cycle of all odd degree vertices with cost upper bounded by $c(\opt)$. Now it is easy to see that we have two disjoint perfect matchings and at least one them has cost at most $c(\opt)/2$. Therefore, it also holds that $c(M) \le c(\opt)/2$.
\end{proof}

After Step~2(e) in Algorithm~\ref{algo}, a connected multigraph $H$ of all $n$ vertices is formed, where each vertex has even degree, and the total cost of all edges in $H$ is $$c(H) \le c(C)+c(F)+c(M) \le 2.5 \cdot  c(\opt).$$

\begin{figure}[ht]
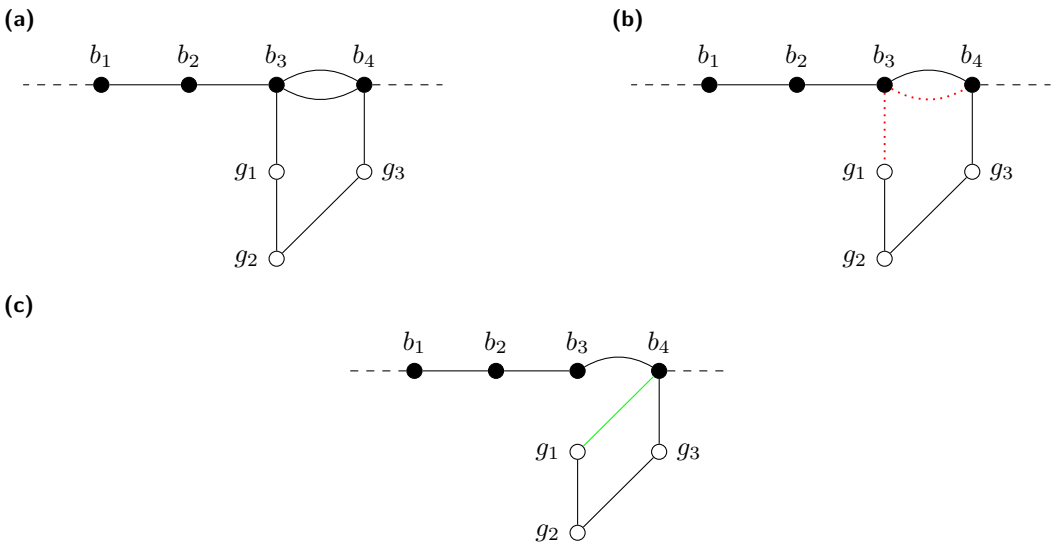

    \centering
  \begin{subfigure}[b]{6cm}
    \caption{}
    \centering\includegraphics[width=\textwidth]{figure_4} 
  \end{subfigure} \hspace*{\fill}
  \begin{subfigure}[b]{6cm}
      \caption{}
    \centering\includegraphics[width=\textwidth]{figure_5} 
  \end{subfigure}
  \begin{subfigure}[b]{\textwidth}
      \caption{}
    \centering\includegraphics[width=0.4*\textwidth]{figure_6} 
  \end{subfigure}
  \caption{An illustration of Step~1 of \textsc{ShortCut} (Algorithm~\ref{shortcut}). $b_i$'s are bad vertices and $g_i$'s are good vertices. Note that before (and after) this step, all vertices have even degree. The purpose of this step is to remove all double edges between bad vertices from the created graph. Here, we would like to remove the double edge $(b_3,b_4)$ ($b_3, b_4$ are both end vertices, see Figure~(a)). As you can see in Figure~(b), we remove one edge $(b_3, b_4)$ and the edge $(b_3, g_1)$ (red dotted edges). Finally, we connect $g_1$ with $b_4$ (green edge) ensuring that the graph is connected and all vertices have even degree. It is trivial to see that this step does not increase the total cost of the edges of the graph due to the triangle inequality in $\Delta(b_3,g_1, b_4)$.}
  \label{fig:shortcut}
\end{figure}

In the last step of the algorithm (Step~2(f)), we run subroutine $\textsc{ShortCut}(H)$ (see Figure~\ref{fig:shortcut} for an illustration). As we explained, all bad vertices have even degree in $G'$ (before matching $M$) except maybe for the bad end-vertices $b_e(q_i^{opt})$. Specifically, all bad vertices that are not end-vertices have degree exactly $2$ before Step~2(d). Thus, the only case in which a double-edge between two bad vertices can exist in $H$ is when connecting two bad end-vertices (at least one of them has to form a bad chain of a single vertex).

In Step~1 of $\textsc{ShortCut}$, for each double edge between bad vertices $(b_e(q_i^{opt}), b_e(q_{i+1}^{opt}))$ we first find a good vertex $g_i^{alg}$ that is adjacent to $b_e(q_i^{opt})$ or a good vertex $g_i^{alg}$ that is adjacent to $b_e(q_i^{opt})$. To show the existence of such a good vertex, assume that there are no good vertices adjacent to either  $b_e(q_i^{opt})$ or  $b_e(q_{i+1}^{opt})$. Then, by construction both bad vertices would have degree $2$ before Step~2(d) of Algorithm~\ref{algo} (before the matching) and a double edge $(b_e(q_i^{opt}), b_e(q_{i+1}^{opt}))$ could not occur in $H$: both vertices would be connected with an edge and each one of them would be connected with another bad vertex in the cycle (the only way that a double edge would occur is if $b_e(q_i^{opt}), b_e(q_{i+1}^{opt})$ were the only vertices in $G$, a trivial case). Therefore, there exists such a good vertex for every double edge between bad vertices.

Assume wlog that we have a good vertex  $g_i^{alg}$. In Step~1, Algorithm~\ref{shortcut} removes an edge $(b_e(q_i^{opt}), b_e(q_{i+1}^{opt}))$ and edge $(b_e(q_i^{opt}), g_i^{alg})$ and adds edge $(g_i^{alg}, b_e(q_i^{opt}))$. Using triangle inequality 
\begin{equation*}
    c(g_i^{alg}, b_e(q_i^{opt}) \le (b_e(q_i^{opt}), b_e(q_{i+1}^{opt})) + c(b_e(q_i^{opt}), g_i^{alg}),
\end{equation*}
we have that Step~1 does not increase the cost of the edges of the multigraph $H$.
The triangle inequality holds in any triangle that contains a good vertex. Note also that this step reduces the degree of one bad vertex by 2, and does not change the degree of other vertices. Then, after this step, every vertex still has even degree. Thus, in Step~2 an Eulerian tour $T^H$ in $H$ can be found. The length of $T^H$ is at most $2.5 \cdot c(\opt)$. 

Consider tour $T^H$ as a sequence $s$ of the visited vertices. Let us state a useful lemma about two occurrences of a bad vertex in $s$.

\begin{lemma}\label{lemma_bad}
    Let $b^{(1)}$ and $b^{(2)}$ be two (consecutive) occurrences of a bad vertex $b \in V^b$ in $s$. Then, there exists an adjacent good vertex $g \in V^g$ in $s$ to one of the two occurrences of $b$.
\end{lemma}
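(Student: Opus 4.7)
The plan is to bound the number of edges incident to $b$ whose other endpoint is a bad vertex in the modified multigraph $H$ after Step~1 of \textsc{ShortCut}, and then conclude by pigeonhole on the four distinct edges incident to $b$ that are traversed at $b^{(1)}$ and $b^{(2)}$ in the Euler tour $T^H$.

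First I would reduce to the case where $b$ is a bad end-vertex $b_e(q_i^{opt})$ for some $i$. If not, then $b$ does not appear in the forest $F$ at all, so its degree in $G'$ is exactly $2$ (its two cycle neighbors in $C$); hence $b \notin O$, $b$ has no matching edge, and $b$ is not touched by Step~1 (which only manipulates end-vertices). Thus $b$ has degree $2$ in $H$ and appears only once in $T^H$, making the hypothesis of the lemma vacuous.

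Next I would show that in the modified $H$ the vertex $b$ is incident to at most three edges whose other endpoint is a bad vertex. Such edges can come only from $C$ (contributing exactly two, to $b$'s two cycle neighbors) or from $M$ (contributing at most one, and only sometimes to another bad end-vertex); the forest $F$ contributes only edges to good children of $b$, since the single bad vertex of each tree $T_i$ is its root, and the Step~1 swap only ever adds an edge of the form $(g, b_e(q_j^{opt}))$ with $g \in V^g$. Moreover, Step~1 can only decrease the number of bad-to-bad edges at $b$, since it collapses a bad-to-bad double edge into a single one. Hence the count is at most $2+1=3$.

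Finally, since $T^H$ uses each edge of $H$ exactly once, the two occurrences $b^{(1)}, b^{(2)}$ consume four distinct edges incident to $b$; by the bound above at least one of these four edges has a good endpoint $g$, and $g$ is then adjacent to $b^{(1)}$ or $b^{(2)}$ in $s$, which is exactly what the lemma asserts. I expect the main obstacle to lie in the second step, specifically in verifying that no bad-to-bad double edge can arise in $H$ other than those handled by Step~1 (which requires a case analysis on whether a given bad end-vertex's matching partner coincides with one of its cycle neighbors), and that the Step~1 swap never introduces new bad-to-bad adjacencies; once this bookkeeping is in place, the concluding pigeonhole is immediate.
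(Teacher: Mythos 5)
Your proposal is correct and follows essentially the same route as the paper: both rest on the observation that after Step~1 of \textsc{ShortCut} a bad vertex has at most three incident edges leading to bad vertices (two from $C$, at most one from $M$, none from $F$ or from the Step~1 swaps) and no bad--bad double edges, so among the four distinct edges traversed at the two occurrences at least one leads to a good vertex. The paper merely packages this pigeonhole as a two-case analysis on whether one or at least two vertices separate $b^{(1)}$ and $b^{(2)}$ in $s$, while your version states the edge count and the reduction to end-vertices a bit more explicitly.
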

\begin{proof} 
    By construction we have that each bad vertex is adjacent to at most $3$ bad vertices in $H$ (two in the cycle $C$ of bad vertices, and at most one in the matching $M$) and that there are no double edges between two bad vertices. %Thus, there exists at least one good vertex $g \in V^g$ that is adjacent to $b^{(1)}$ or $b^{(2)}$ in $s$. 
    %If there exists an adjacent good vertex to one of $b_1$ or $b_2$ between $b_1$ and $b_2$, then our lemma is true. Assume now that there is not such a good vertex between $b_1$ and $b_2$.
    We distinguish the following cases:
    \begin{itemize}
        \item If there is only one vertex between $b^{(1)}$ and $b^{(2)}$, it cannot be bad as there are no double edges connecting bad vertices, so it is a good one and the statement occurs.

        \item If there are at least two  vertices between $b^{(1)}$ and $b^{(2)}$ then, since every bad vertex is adjacent to at most $3$ bad vertices in $H$, we have that there exists at least one good vertex adjacent  to either $b^{(1)}$ or $b^{(2)}$.
    \end{itemize}
    Therefore, in all cases the statement of the lemma is true.
\end{proof}

Due to Lemma~\ref{lemma_bad} we can skip every bad vertex that appears in $s$ more than once without increasing the cost of the tour using triangle inequality (Step~3). In Step~4, we remove the good vertices that appear more than once in $s$ using again the triangle inequality. Consequently, we get a simple cycle of all $n$ vertices (TSP tour) with cost $c(\alg) \le 2.5 \cdot c(\opt)$ and state the following theorem.

\begin{theorem}
    Algorithm~\ref{algo} has an approximation ratio of $2.5$.
\end{theorem}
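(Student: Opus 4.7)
The plan is to combine the already-established bounds on the individual components $C$, $F$, and $M$, and then argue that the \textsc{ShortCut} subroutine turns the Eulerian multigraph $H$ into a Hamiltonian cycle without increasing the total edge cost. Concretely, I would begin by fixing the enumeration branch in Step~2 in which the guessed permutation and partition of bad vertices match the ordering they appear in in $\opt$ (such a branch exists by exhaustive enumeration). In this branch, Equation~(\ref{eq_a1}) together with Lemma~\ref{lemma_a} gives $c(C) \le c(\opt)$, the $t^{opt}$-MSF lemma gives $c(F) \le c(\opt)$, and the matching lemma gives $c(M) \le c(\opt)/2$. Summing yields $c(H) \le c(C) + c(F) + c(M) \le 2.5 \cdot c(\opt)$ for the Eulerian multigraph $H$ produced after Step~2(e).

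The remaining and main task is to argue that the output of \textsc{ShortCut}($H$) is a valid Hamiltonian cycle whose cost is at most $c(H)$. I would split this into three sub-claims corresponding to the three modification phases. First, for Step~1 of \textsc{ShortCut}, I would verify that whenever a double edge between two bad end-vertices $b_e(q_i^{opt})$ and $b_e(q_{i+1}^{opt})$ exists in $H$, at least one of them has a good neighbor in $H$ (otherwise both would have degree exactly $2$ before the matching, and $H$ would not contain such a double edge except in the trivial two-vertex case). Then the local replacement uses a triangle that contains a good vertex, so the triangle inequality applies and the cost does not increase; moreover parity is preserved, so the graph remains Eulerian and connected.

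Second, for Step~3, I would invoke Lemma~\ref{lemma_bad} to show that any two consecutive occurrences of the same bad vertex $b$ in the Euler sequence $s$ have an adjacent good vertex in $s$, so shortcutting that repeated bad vertex corresponds to applying the triangle inequality on a triangle containing a good vertex, which holds. Third, for Step~4, shortcutting a repeated good vertex $g$ is directly justified because any triangle containing $g$ satisfies the triangle inequality. After both shortcutting phases, every vertex appears exactly once in the resulting sequence, yielding a Hamiltonian cycle $\alg$ of total cost $c(\alg) \le c(T^H) = c(H) \le 2.5 \cdot c(\opt)$.

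The main obstacle is the second paragraph: one has to be careful that the local modification in Step~1 of \textsc{ShortCut} does not create new problems (e.g.\ new double edges between bad vertices or odd-degree vertices), and that after all such modifications the cost-preserving shortcuts of Steps~3 and~4 can still be performed, since non-metricity forbids blindly invoking the triangle inequality. Lemma~\ref{lemma_bad} and the structural observation that every bad non-end-vertex has degree exactly $2$ in $G'$ are the two facts that make the whole analysis go through, and I would emphasize them as the crux of the argument.
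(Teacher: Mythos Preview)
Your proposal is correct and follows essentially the same approach as the paper: fix the enumeration branch matching $\opt$, bound $c(C)$, $c(F)$, $c(M)$ separately via Lemma~\ref{lemma_a}, the $t^{opt}$-MSF lemma, and the matching lemma to get $c(H)\le 2.5\cdot c(\opt)$, and then argue that each of the three phases of \textsc{ShortCut} (double-edge removal, bad-vertex shortcutting via Lemma~\ref{lemma_bad}, good-vertex shortcutting) preserves or decreases the cost. The structural observations you flag as the crux---that non-end bad vertices have degree exactly~$2$ in $G'$, and that Step~1 only touches triangles containing a good vertex while preserving parity---are precisely the points the paper relies on as well.
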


\section{Conclusion}
We studied the TSP problem parameterized by the number of triangles in which the edge costs violate the triangle inequality and gave a $2.5$-approximation algorithm that runs in FPT time, improving the approximation ratio of $3$ in~\cite{zhou}. We believe that this line of work for TSP is particularly interesting since it bridges the gap between the non-metric and metric case. A natural question is whether one can further improve the approximation factor, getting closer to the approximation factor of metric TSP. Another interesting research direction is to consider the second parameter $k'$, introduced in~\cite{zhou}, which measures the minimum number of vertices whose removal turns the input instance into a metric one. It is an open question if there exists a constant approximation FPT algorithm parameterized by $k'$.

\bibliography{bibliography}

\end{document}